%% file: template.tex
\documentclass[10pt,journal,twocolumn,twoside]{IEEEtran}
\usepackage{setspace}

\usepackage{cite}
\usepackage{soul}
\usepackage{caption}
\usepackage{array}
\usepackage{amsmath,amssymb,amsfonts}
\usepackage{bm}
\usepackage{algorithmic}
\usepackage{graphicx}
\usepackage{tabularx}
\usepackage{booktabs}
\usepackage{multirow}
\usepackage{float}
\usepackage{textcomp}
\usepackage{xcolor}
\usepackage{amsthm}
\usepackage{subfig}
\usepackage{algorithm}
\usepackage{etoolbox}
\usepackage{cleveref}

\input{defs.tex}

\ifCLASSINFOpdf
\else
\fi
\begin{document}
\title{Multi-Target DoA Estimation with a Single Rydberg Atomic Receiver by Spectral\\
Analysis of Spatially-Resolved Fluorescence}
\author{Liangcheng Han, Haifan Yin,~\IEEEmembership{Senior~Member,~IEEE}, and Mérouane Debbah,~\IEEEmembership{Fellow,~IEEE}%
\thanks{Liangcheng Han and Haifan Yin are with the School of Electronic Information and Communications, Huazhong University of Science and Technology, Wuhan 430074, China (e-mail: hanlc@hust.edu.cn; yin@hust.edu.cn).}%
\thanks{Mérouane Debbah is with KU 6G Research Center, Department of Computer and Information Engineering, Khalifa University, Abu Dhabi 127788, UAE (email:merouane.debbah@ku.ac.ae) and also with CentraleSupelec, University ParisSaclay, 91192 Gif-sur-Yvette, France}%
\thanks{The corresponding author is Haifan Yin.}%
\thanks{This work was supported by the Major Program (JD) of Hubei Province (2025BEA001), the Fundamental Research Funds for the Central Universities and the National Natural Science Foundation of China under Grant 62071191.}}%

\maketitle

\begin{abstract}
Rydberg-based Direction-of-Arrival (DoA) estimation has been hampered by the complexity of receiver arrays and the single-target, narrow-band limitations of existing single-receiver methods. This paper introduces a novel approach that addresses these limitations. We demonstrate that by spatially resolving the fluorescence profile along the vapor cell, the multi-target problem can be effectively solved. Our approach hinges on the insight that by superimposing incoming signals with a strong local oscillator (LO), the complex atomic absorption pattern is linearized into a simple superposition of sinusoids. In this new representation, each spatial frequency uniquely and directly maps to the DoA of a target. This reduces the multi-target challenge into a spectral estimation problem, which we address using Prony’s method. Our approach, termed Imaging-based Spectral Estimation (ISE), inherently supports multi-target detection and restores the full broadband capability of the sensor by removing the restrictive cell-length dependency.  This development also shows potential for realizing multi-channel Rydberg receivers and the continuous-aperture sensing required for holographic multiple-input multiple-output (MIMO). We develop a comprehensive theoretical model, derive the Cramér-Rao Lower Bound (CRLB) as a performance benchmark, and present simulations validating the effectiveness of the approach to resolve multiple targets. 
\end{abstract}

\begin{IEEEkeywords}
Rydberg atoms, Direction-of-Arrival (DoA) estimation, spectral estimation, atomic sensors, multi-target detection.
\end{IEEEkeywords}
\IEEEpeerreviewmaketitle
\section{Introduction}\label{I}
Rydberg atomic receivers are emerging as a transformative technology for wireless communications and sensing, offering unprecedented sensitivity that challenges the limits of classical electromagnetic sensors \cite{gong2025rydberg,schlossberger2024rydberg, zhang2024rydberg, yuan2023quantum}. These sensors leverage the properties of Rydberg atoms—alkali metals like Rubidium (Rb) excited to a high principal quantum number—which exhibit giant dipole moments and a correspondingly strong response to incident radio-frequency (RF) fields \cite{guo2025aoa, fancher2021rydberg, gallagher1994rydberg}. This quantum-based approach enables the measurement of the amplitude, phase, frequency, and polarization  of an RF field by optically probing the energy level shifts within an atomic vapor \cite{holloway2014broadband, sedlacek2013atom, chen2024instantaneous, simons2019rydberg, jing2020atomic}. The key advantages over traditional conductor-based antennas include immunity to thermal noise and ultra-wideband tunability without structural changes, spanning from MHz to THz frequencies \cite{guo2025aoa, holloway2014broadband, simons2021continuous, liu2022continuous}.

A typical application in this domain is  Direction-of-Arrival (DoA) estimation, which is central to the function of radar, navigation, and modern wireless systems \cite{eranti2022overview, skolnik2008radar, evans1982advanced, moghaddasi2020multifunction}. Conventionally, DoA is determined using phased arrays of spatially separated antennas to measure the relative phase of an incoming wavefront \cite{chen2010introduction}. Following this paradigm, initial efforts in Rydberg-based DoA sensing have predominantly relied on arrays of two or more atomic receivers \cite{robinson2021determining, mao2024digital, gong2025rydberg-multi-target}. Recently, a novel method was proposed to achieve DoA estimation using only a \emph{single} Rydberg atomic receiver \cite{guo2025aoa}. This technique ingeniously exploits the interference pattern formed between the incident RF signal and a local oscillator (LO) field within the vapor cell. By measuring the total transmitted power of a probe laser, which is modulated by the spatially varying field intensity along the cell, the DoA of a single incoming signal can be recovered using a Particle Swarm Optimization (PSO) algorithm.

While insightful, the single-receiver method based on integrated power measurement presents certain challenges for broader practical application. First, it is primarily designed for estimating the DoA of a single signal, as the complexity of the optimization landscape increases substantially in multi-signal scenarios. Second, the accuracy of the method is sensitive to a specific vapor cell length that is dependent on the signal's wavelength. This requirement limits the broadband operational capability of Rydberg sensors, as the physical hardware would need to be changed to accommodate different frequencies. To address these issues, alternative single-receiver approaches have been explored. For instance, the work in \cite{chen2025polarization} exploits the intrinsic vector sensitivity of Rydberg atoms to both electric-field polarization and RF magnetic-field orientation, reconstructing the DoA from Zeeman-resolved spectral signatures without requiring spatial diversity.

Inspired by these advances and the core concept of relating DoA to the field distribution within the vapor, we propose a new paradigm that overcomes the aforementioned limitations by using spatially-resolved fluorescence imaging.   The experimental feasibility of this imaging technique is well-established. For instance, methods for observing the fluorescence from a light sheet passed through an atomic vapor have been successfully used to reconstruct 3D beam profiles \cite{radwell20133d}, image standing waves from DC to GHz \cite{schlossberger2024two}, and map magnetic fields \cite{schlossberger2024two}. These works collectively demonstrate that fluorescence imaging is a versatile and powerful tool for spatially resolving phenomena within atomic vapors, providing a robust experimental foundation for  our proposed approach. Building on this, recent work has  shown that the DoA of an RF field can be inferred by imaging standing waves formed by reflections off the dielectric walls of a glass vapor cell \cite{schlossberger2025angle}.
\begin{figure*}[htbp]
\centering
\includegraphics[width=0.8\textwidth]{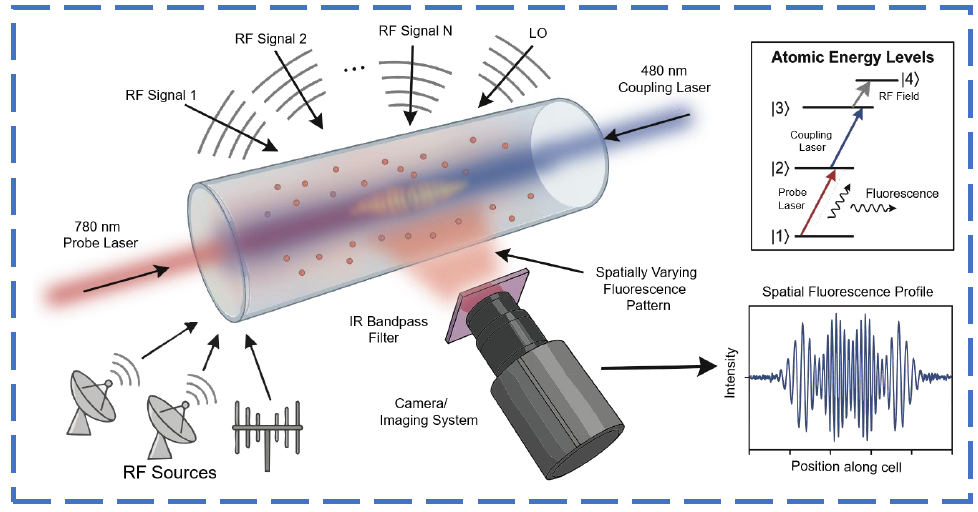}
  \caption{Conceptual diagram of the spatially-resolved Rydberg atomic receiver.}
  \label{fig:rydberg_diagram}
  \vspace{-0.5cm}
\end{figure*}

In this paper, we advance this concept by introducing a novel theoretical framework that enables multi-signal DoA estimation with a single vapor cell. Our approach leverages lateral fluorescence imaging to spatially resolve the RF-field-induced Autler-Townes splitting profile along the propagation axis of the probe laser. As detailed in our theoretical model, the superposition of multiple incident signals and an LO field creates a complex yet coherent spatial intensity pattern within the vapor cell. Under an LO-dominant regime, the resulting local atomic absorption coefficient can be linearized and expressed as a sum of cosines. Each cosine term corresponds to a unique spatial frequency, which is directly related to the direction of arrival of an individual signal. By spatially sampling the fluorescence profile, we transform the DoA estimation problem into a classic spectral estimation problem. For this task, we employ Prony's method. While Prony's method can be sensitive to noise, its application is justified here because Rydberg atomic sensors are fundamentally immune to the thermal noise that plagues conventional receivers \cite{guo2025aoa, gong2024rydberg}. Furthermore, compared to other high-resolution DoA estimation algorithms, Prony's method offers the distinct advantages of requiring only a single snapshot of the spatial data and possessing low computational complexity, making it highly efficient for this application. Importantly, this approach has the potential to provide a unified solution to two major challenges in the field: realizing multi-channel Rydberg receivers \cite{allinson2024simultaneous, meyer2023simultaneous, liu2022deep} and achieving the continuous-aperture sensing with Rydberg atoms required for holographic multiple-input multiple-output (MIMO) \cite{huang2020holographic, gong2023holographic}.
This approach, which we term Imaging-based Spectral Estimation (ISE),  overcomes the shortcomings of the previous integrated-power method. It inherently supports multi-signal detection, removes the restrictive dependency on a specific cell length, and thereby restores the full broadband operational capability of the Rydberg sensor. The main contributions of this paper are as follows
\begin{itemize}
    \item We develop a comprehensive theoretical framework that connects the spatially-resolved fluorescence profile within the vapor cell to the underlying multi-signal RF interference pattern. This model overcomes the single-target limitation of previous single-receiver methods by replacing integrated power measurement with a spatially-dependent analysis.
    \item We demonstrate how the inherently non-linear atomic response to the RF field can be linearized using a first-order Taylor expansion under an LO-dominant approximation. This crucial step reduces the complex multi-signal interaction to a simple superposition of sinusoids, transforming the problem into a solvable linear spectral estimation task.
    \item We introduce a virtual array processing scheme to discretize the continuous fluorescence profile for digital analysis. By applying a family of shifted spatial window functions, this method creates a uniformly sampled measurement vector. The resulting data, a sum of sinusoids, is in an ideal form for high-resolution spectral estimation algorithms like Prony's method.
    \item We derive the Cramér-Rao Lower Bound (CRLB) for multi-target DoA estimation in our system, explicitly accounting for other unknown parameters like signal amplitude and phase. This analysis provides a fundamental performance benchmark and reveals key dependencies between estimation accuracy, signal-to-noise ratio (SNR), and physical parameters such as the vapor cell length.
\end{itemize}

The remainder of this paper is organized as follows. Sec. \ref{II} presents the system model and theoretical principles. Sec. \ref{III} develops the proposed ISE approach, detailing the linearization under LO-dominance, virtual array processing, and Prony's method for spectral estimation. Sec. \ref{sec:crlb} derives the CRLB as a performance benchmark. Sec. \ref{IV} provides simulation results validating the approach. Sec. \ref{sec:discussion} addresses practical implementation considerations, and Sec. \ref{V} concludes the paper.

\section{System Model and Theoretical Principles}\label{II}
This section outlines the fundamental system model and theoretical underpinnings of our proposed Rydberg atomic receiver for multi-target DoA estimation.

\subsection{Rydberg Atom-Based Receiver Architecture}
The receiver system is centered around a glass vapor cell containing a dilute gas of Rb atoms. As illustrated in Fig. \ref{fig:rydberg_diagram}, the sensing mechanism employs a four-level atomic energy scheme. A 780~nm probe laser is tuned to the transition between the ground state $|1\rangle$ and an intermediate state $|2\rangle$, and a counter-propagating 480~nm coupling laser drives the transition from $|2\rangle$ to a highly excited Rydberg state $|3\rangle$. The incident RF field, which is the signal to be measured, is tuned to be near-resonant with the transition between $|3\rangle$ and an adjacent Rydberg state $|4\rangle$.

The probe laser excites atoms to the intermediate state $|2\rangle$, which then spontaneously decay and emit 780~nm fluorescence. The coupling laser establishes an Electromagnetically Induced Transparency (EIT) condition, which ideally makes the vapor transparent to the probe laser, minimizing both absorption and fluorescence. The external RF field perturbs the Rydberg states via the Stark effect, which modifies the EIT resonance condition. This change in the atomic state modulates the susceptibility of the atomic vapor, $\chi$, and, consequently, the intensity of the emitted fluorescence. The output of the sensor is obtained by imaging this spatially-varying fluorescence from the side with a camera. An infrared (IR) bandpass filter is placed before the camera to block any scattered light from the coupling laser. 

\subsection{Field Interference and Optical Response}
The core principle of many Rydberg-based sensing techniques, including the one proposed in this paper, is the creation of a spatially varying interference pattern within the vapor cell. This is achieved by superimposing the incoming signal fields with a known LO field. For a single incoming plane wave signal, $E_{\text{sig}}(x) = A e^{j(kx\sin\theta+\phi)}$, and an LO field, $E_{\text{LO}}(x) = A_0 e^{j(kx\sin\theta_0+\phi_0)}$, the total RF field is $E_{\text{RF}}(x) = E_{\text{sig}}(x) + E_{\text{LO}}(x)$. The intensity of this field, which dictates the atomic response, is given by \cite{guo2025aoa}
\begin{equation}
|E_{\text{RF}}(x)|^2 = A_0^2 + A^2 + 2A_0A\cos(\Delta k x - \Delta\phi),
\label{eq:field_intensity}
\end{equation}
where $\Delta k \triangleq k(\sin\theta_0 - \sin\theta)$ is the spatial frequency of the interference pattern, and $\Delta\phi = \phi - \phi_0$ is the phase difference. The DoA, $\theta$, is uniquely related to this spatial frequency. The local RF field intensity determines the local Rabi frequency, $\Omega_{\text{RF}}(x) = \mu_{\text{RF}}|E_{\text{RF}}(x)|/\hbar$, where $\mu_{\text{RF}}$ is the RF transition dipole moment.

This spatially modulated Rabi frequency, in turn, modulates the optical properties of the vapor. The interaction between the probe light and the atomic vapor is governed by the Beer-Lambert law. Following the micro-element method established in \cite{guo2025aoa}, the total power transmission ratio through a vapor cell of length $L$ is given by
\begin{equation}
\frac{P_{\rm out}}{P_{\rm in}} = \exp\left(-\int_0^L \alpha(x)\,dx\right),
\label{eq:globalT}
\end{equation}
where the local absorption coefficient $\alpha(x)$ is given by
\begin{equation}
\alpha(x) = k_{pr}\,\mathrm{Im}\,\chi(x),
\label{eq:alpha_def}
\end{equation}
where $k_{pr} = 2\pi/\lambda_{pr}$ denotes the wavenumber of the probe
laser. For a four-level system, the susceptibility is given by \cite{guo2025aoa}
\begin{small}
    \begin{equation}
\label{eq:susceptibility_full}
\begin{split}
    &\chi(x) ={} \\
    & \frac{j\cfrac{2\pi N_a \mu_{p}^2}{\epsilon_0 \hbar}}{\gamma_{21} - j\Delta_p + \cfrac{\Omega_c^2/4}{\gamma_{31} - j(\Delta_p + \Delta_c) + \cfrac{\Omega_{\text{RF}}^2(x)/4}{\gamma_{41} - j(\Delta_p + \Delta_c + \Delta_{\text{RF}})}}}.
\end{split}
\end{equation}
\end{small}
where $N_a$ is the atomic density, $\mu_{p}$ is the dipole moment of the probe transition, $\epsilon_0$ is the vacuum permittivity, and $\hbar$ is the reduced Planck constant. The terms $\gamma_{21}$, $\gamma_{31}$, and $\gamma_{41}$ are the decay rates for the respective energy levels, while $\Delta_p$, $\Delta_c$, and $\Delta_{\text{RF}}$ are the detunings for the probe, coupling, and RF fields. Under the common experimental conditions of a weak, on-resonance probe, i.e., $\Delta_p=0$, and negligible decay from the Rydberg states, i.e., $\gamma_{31}, \gamma_{41} \ll \gamma_{21}$, this expression simplifies to 
\begin{equation}
\chi(x) \approx j\cfrac{2\pi N_a \mu_{p}^2}{\epsilon_0 \hbar} \cfrac{1}{\gamma_{21} + \cfrac{\Omega_c^2/4}{-j\Delta_c + \cfrac{\Omega_{RF}^2(x)/4}{-j(\Delta_c + \Delta_{\text{RF}})}}}.
\label{eq:susceptibility_simple}
\end{equation}
The response of the system is monitored by collecting the fluorescence emitted from the side of the vapor cell. The fluorescence intensity depends on the local probe laser intensity and the local atomic absorption. For a simple two-level system, the photon scattering rate is given by \cite{metcalf1999laser}
\begin{equation}
\label{eq:scattering_rate}
{R}_{\mathrm{{sc}}} = \frac{\Gamma }{2}\frac{\left( I/{I}_{\mathrm{{sat}}}\right) }{1 + \left( {I/{I}_{\mathrm{{sat}}}}\right)  + 4{\left( \Delta /\Gamma \right) }^{2}},
\end{equation}
where $\Gamma$ is the inverse of the upper state lifetime, $I$ is the laser intensity, ${I}_{\mathrm{{sat}}}$ is the saturation intensity, and $\Delta$ is the laser detuning. In the weak probe regime, i.e.,  $I \ll I_{\text{sat}}$, this scattering rate, and thus the  fluorescence power $P_\text{f}$, is directly proportional to the laser intensity $I$ \cite{schlossberger2024two, radwell20133d}. This relationship provides the basis for our spatially-resolved measurement.

\section{Proposed Approach}\label{III}
The single-receiver method based on integrated power measurement, while innovative, is  limited to single-target scenarios and suffers from a restrictive dependency on a specific, wavelength-dependent cell length. To overcome these limitations, we introduce our Imaging-based Spectral Estimation (ISE) approach. This paradigm shifts from integrated power measurement to spatially-resolved sensing, where by imaging the fluorescence profile along the vapor cell, we can directly access the spatial distribution of the fluorescence power $P_f(x)$.

\subsection{Multi-Signal Linearization under LO-Dominance}
Consider a scenario with $N$ incident plane-wave signals, each with amplitude $A_i$, phase $\phi_i$, and angle of arrival $\theta_i$. When superimposed with an LO field ($A_0, \phi_0, \theta_0$), the total RF field inside the vapor cell is given by
\begin{equation}
E_{\text{RF}}(x) = A_0 e^{j(kx\sin\theta_0+\phi_0)} + \sum_{i=1}^{N} A_i e^{j(kx\sin\theta_i+\phi_i)}.
\label{eq:Erf_multi}
\end{equation}
The intensity of this field, $s(x) = |E_{\text{RF}}(x)|^2$, determines the local atomic absorption coefficient $\alpha(x)$. The full expression for the intensity is complex, as it contains not only the desired signal-LO beat terms but also cross-terms between different signals, which is given by
\begin{equation}
\small 
\begin{split}
&|E_{\text{RF}}(x)|^2 = A_0^2 + \sum_{i=1}^N A_i^2 + 2A_0\sum_{i=1}^N A_i\cos(\Delta k_i x-\Delta\phi_i)\\
&\quad + 2\sum_{1\le i<\ell\le N}A_i A_\ell \cos\big((k\sin\theta_i-k\sin\theta_\ell)x +(\phi_i-\phi_\ell)\big),
\end{split}
\label{eq:multi-square}
\end{equation}
where $\Delta k_i \triangleq k(\sin\theta_0 - \sin\theta_i)$ and $\Delta\phi_i \triangleq \phi_i - \phi_0$. The presence of the signal-signal cross-terms complicates the relationship between the field intensity and the DoAs.

The key to our approach is to operate in an LO-dominant regime, where the LO field is significantly stronger than the sum of the incident signal fields, i.e., $A_0 \gg \sum A_i$. In this regime, the nonlinear relationship between the absorption coefficient $\alpha(x)$ and the field intensity $s(x)$ can be linearized. Combining \eqref{eq:alpha_def} and \eqref{eq:susceptibility_simple}, the absorption coefficient can be written as
\begin{equation}
    \alpha(x) = C \cdot f(s(x)),
\end{equation}
where
\begin{equation}
    f(s) = \cfrac{1}{\gamma_{21}^2+\cfrac{(\Omega_c^2/4)^2}{(\Delta_c-\beta s)^2}},
\end{equation}
and the constants are defined as
\begin{equation}
    C \triangleq \frac{2\pi N_a \mu_p^2 k_{pr} \gamma_{21}}{\epsilon_0 \hbar}, \quad \beta \triangleq \frac{\mu_{RF}^2}{4\hbar^2(\Delta_c+\Delta_{RF})}.
\end{equation}
We perform a first-order Taylor expansion of $\alpha(s)$ around the point $s_0 = A_0^2$ as
\begin{equation}
    \begin{aligned}
        \alpha(x) &\approx \alpha(s_0) + \alpha'(s_0) (s(x) - s_0)\\ &= C f(s_0) + C f'(s_0) (s(x) - s_0),
    \end{aligned}
\label{eq:alpha-linear}
\end{equation}
where the derivative $f'(s)$ is given by
\begin{equation}
f'(s) = -\frac{2\beta (\Omega_c^2/4)^2}{(\Delta_c-\beta s)^3} \cdot \cfrac{1}{\left(\gamma_{21}^2+\cfrac{(\Omega_c^2/4)^2}{(\Delta_c-\beta s)^2}\right)^2}.
\label{eq:fprime}
\end{equation}
The term $s(x) - s_0$ represents the deviation from the LO-only intensity. From  \eqref{eq:multi-square}, this is given by
\begin{align}
    &s(x) - s_0 = \sum_{i=1}^N A_i^2 + 2A_0\sum_{i=1}^N A_i\cos(\Delta k_i x-\Delta\phi_i) \notag \\
    &\quad + 2\sum_{1\le i<\ell\le N}A_i A_\ell \cos\big((k\sin\theta_i-k\sin\theta_\ell)x +(\phi_i-\phi_\ell)\big).
\end{align}
Under the LO-dominant assumption, the signal self-terms and the signal-signal cross-terms  are second-order small quantities and can be neglected. The intensity deviation is thus dominated by the signal-LO beat terms, which yields
\begin{equation}
    s(x) - s_0 \approx 2A_0\sum_{i=1}^N A_i\cos(\Delta k_i x-\Delta\phi_i).
    \label{eq:s_deviation_approx}
\end{equation}
Substituting this approximation into the Taylor expansion in  \eqref{eq:alpha-linear} linearizes the local absorption coefficient into a sum of cosines, yielding
\begin{equation}
\alpha(x) \approx C f(s_0) + C f'(s_0) \left( 2A_0\sum_{i=1}^N A_i\cos(\Delta k_i x-\Delta\phi_i) \right).
\end{equation}
This can be rewritten as
\begin{equation}
\alpha(x) \approx \alpha_{\text{DC}} + \sum_{i=1}^N \mathcal{A}_i \cos(\Delta k_i x - \Delta\phi_i),
\label{eq:alpha-cos}
\end{equation}
where $\alpha_{\text{DC}} = C f(s_0)$ is a constant DC offset and $\mathcal{A}_i$ is the effective amplitude of the spatial modulation for the $i$-th signal, defined as
\begin{equation}
    \mathcal{A}_i \triangleq 2C A_0 A_i f'(s_0).
    \label{eq:Gamma-def}
\end{equation}
Each signal contributes a single cosine term with a spatial frequency $\Delta k_i$ that is uniquely related to its DoA, $\theta_i$. This elegant reduction transforms the complex multi-signal interaction into a simple linear superposition of spatial sinusoids. This finding is significant as it provides a direct pathway to advanced receiver concepts. By treating each resolved spatial frequency as an independent data channel, the system naturally functions as a multi-channel receiver. Furthermore, the spatially-resolved measurement along the vapor cell effectively creates a continuous sensing aperture, a key enabling technology for holographic MIMO architectures.

\subsection{Spatially-Resolved Readout and Spectral Formulation}
Having linearized the atomic response, the next step is to devise a measurement scheme that can extract the individual spatial frequency components $\{\Delta k_i\}$. The process begins with capturing the raw fluorescence profile, $P_f(x)$. 
To relate this captured optical signal to the underlying absorption coefficient $\alpha(x)$, we leverage the Beer-Lambert law, which governs the attenuation of the probe laser power, $P_p(x)$, as it propagates through the vapor: $dP_p(x)/dx = -\alpha(x)P_p(x)$. As noted in Sec.~\ref{II}, in the weak probe regime, the emitted fluorescence power at any point is directly proportional to the probe power at same point, i.e., $P_f(x) \propto P_p(x)$. Combining these two relations yields a direct link between the measurable fluorescence and the absorption coefficient as
\begin{equation}
\alpha(x) = -\frac{d}{dx} \ln P_f(x).
\label{eq:alpha_from_fluorescence}
\end{equation}
This equation shows that the local absorption coefficient can be recovered from the logarithmic derivative of the fluorescence profile, which can be computed numerically from the camera's pixel data.

To convert this continuous spatial information into a discrete data vector suitable for spectral analysis, we apply a set of virtual spatial windows in post-processing. We define a family of $K$ window functions $\{w_j(x)\}_{j=1}^K$, and the measurement for the $j$-th virtual channel, $y_j$, is defined as 
\begin{equation}
    y_j \triangleq \int_0^L w_j(x) \alpha(x) dx.
    \label{eq:yj-def}
\end{equation}
This quantity represents the weighted absorption within the $j$-th window. It is computed from the captured fluorescence data by first calculating $\alpha(x)$ via \eqref{eq:alpha_from_fluorescence} and then performing the weighted integration. For the important special case of a rectangular window where $w_j(x) = 1$ for $x$ within the $j$-th window, e.g., from $x_j$ to $x_{j+1}$, and 0 otherwise, this integral simplifies. By substituting \eqref{eq:alpha_from_fluorescence} and using the fact that $P_f(x) \propto P_p(x)$, the measurement becomes 
\begin{equation}
    y_j = \int_{x_j}^{x_{j+1}} \left(-\frac{d}{dx} \ln P_p(x)\right) dx = -\ln\left(\frac{P_p(x_{j+1})}{P_p(x_j)}\right).
    \label{eq:yj_log_ratio}
\end{equation}
This form, where $P_p(x_j)$ and $P_p(x_{j+1})$ are the input and output powers for the window, provides a practical way to compute the channel measurements.
To eliminate the unknown DC offset  and other constant system parameters, a calibration step is performed. We perform a one-time calibration measurement with the LO field only. In this case, $s(x) = A_0^2 = s_0$, and thus the absorption is uniform, $\alpha_{\text{LO}}(x) = C f(s_0) = \alpha_{\text{DC}}$. The calibration measurement is thus $y_j^{\text{(cal)}} = \int_0^L w_j(x) \alpha_{\text{DC}} dx$. Subtracting this from the signal measurement yields the calibrated signal $\tilde{y}_j = y_j - y_j^{\text{(cal)}}$, which is
\begin{equation}
\tilde{y}_j = \int_0^L w_j(x) (\alpha(x) - \alpha_{\text{DC}}) dx.
\label{eq:ytilde-def}
\end{equation}
Now, we substitute our linearized expression for $\alpha(x)$ from  \eqref{eq:alpha-cos} into this definition. The difference $\alpha(x) - \alpha_{\text{DC}}$ is precisely the sum of spatial cosines
\begin{equation}
\alpha(x) - \alpha_{\text{DC}} \approx \sum_{i=1}^N \mathcal{A}_i \cos(\Delta k_i x - \Delta\phi_i).
\end{equation}
Plugging this into  \eqref{eq:ytilde-def} and exchanging the order of summation and integration yields the final linear mixture model
\begin{small}
\begin{equation}
    \begin{aligned}
\tilde{y}_j &\approx \int_0^L w_j(x) \left( \sum_{i=1}^N \mathcal{A}_i \cos(\Delta k_i x - \Delta\phi_i) \right) dx \notag \\
&= \sum_{i=1}^N \mathcal{A}_i \int_0^L w_j(x) \cos(\Delta k_i x - \Delta\phi_i) dx.
\label{eq:lin-mixture}
\end{aligned} 
\end{equation}
\end{small}
This equation shows that each calibrated measurement $\tilde{y}_j$ is a linear combination of the spatial sinusoids, weighted by the window functions. To create a structure amenable to spectral estimation, we choose the window functions to be a family of shifted versions of a single ``mother" window $w_0(x)$ as
\begin{equation}
w_j(x) = w_0(x - x_j), 
\label{eq:shift-window}
\end{equation}
where $x_j = x_0 + (j-1)\Delta x$. This choice of equispaced window centers effectively creates a virtual sensor array. Compared to the phase-difference measurements of a conventional coherent array, ISE is equivalent to performing spatial heterodyne sampling over a continuous aperture. The window centers $\{x_j\}$ act as the `virtual elements', and the spectral response of the mother window, $\widehat{w}_0(\Delta k_i)$, corresponds to the directional pattern of each element. The integral in \eqref{eq:lin-mixture} is evaluated using the Fourier shift theorem. With the mother window's Fourier transform defined as $\widehat{w}_0(\omega) = \int w_0(x) e^{i\omega x} dx$, we have
\begin{align}
& \int_0^L w_j(x)\cos(\Delta k_i x-\Delta\phi_i)\,dx \notag \\
& \qquad =\Re\!\left[e^{-i\Delta\phi_i}\!\int w_0(x-x_j)e^{i\Delta k_i x}\,dx\right]\notag\\
& \qquad =\Re\!\left[e^{-i\Delta\phi_i}\,e^{i\Delta k_i x_j}\widehat{w}_0(\Delta k_i)\right].
\end{align}
This transforms the measurement model into the canonical form of a sum of real sinusoids, given by
\begin{equation}
\tilde{y}_j = \Re\left[\sum_{i=1}^N b_i e^{i\Delta k_i x_j}\right] = \sum_{i=1}^N |b_i| \cos(\Delta k_i x_j + \arg(b_i)),
\label{eq:sum-of-exps}
\end{equation}
where the complex amplitude $b_i$ is defined as
\begin{equation}
    b_i \triangleq \mathcal{A}_i \widehat{w}_0(\Delta k_i) e^{-i\Delta\phi_i}.
\end{equation}

\begin{theorem}
\label{thm:sampling}
For unambiguous spectral estimation of spatial frequencies, the spatial sampling interval $\Delta x$ and the width $\ell$ of a rectangular window function $w_0(x)$ must satisfy $\Delta x \le \lambda/4$ and $\ell < \lambda/2$, respectively.
\end{theorem}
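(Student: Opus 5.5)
We start from the range of the spatial frequencies, apply the Nyquist criterion to the sampling grid $\{x_j\}$ to obtain the bound on $\Delta x$, and then require that the rectangular window's transfer function have no nulls on that range to obtain the bound on $\ell$.

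\textbf{Step 1: range of the spatial frequencies.} By definition, $\Delta k_i = k(\sin\theta_0-\sin\theta_i)$ with $k=2\pi/\lambda$. For arbitrary DoAs and LO angle, $\sin\theta_0-\sin\theta_i\in[-2,2]$. Hence every spatial frequency lies in
\[
|\Delta k_i|\le 2k=4\pi/\lambda .
\]

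\textbf{Step 2: sampling condition.} In the model \eqref{eq:sum-of-exps}, $\tilde y_j$ samples the sinusoids $e^{i\Delta k_i x}$ at the points $x_j=x_0+(j-1)\Delta x$. Two frequencies differing by $2\pi/\Delta x$ produce identical samples. The map $\Delta k\mapsto e^{i\Delta k\Delta x}$ must therefore be injective on the interval $[-2k,2k]$. This interval has length $4k$, so we need $4k\le 2\pi/\Delta x$, which gives
\[
\Delta x\le \frac{\pi}{2k}=\frac{\lambda}{4}.
\]
The measurements are real, so each cosine contributes the conjugate pair $\pm\Delta k_i$. This is handled by identifying $\Delta k$ with $-\Delta k$ through $\cos$. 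Prony's method recovers the roots $z_i=e^{\pm i\Delta k_i\Delta x}$, and $\Delta k_i$ is read off from $\arg z_i/\Delta x$ without ambiguity under this condition. I would note that the endpoint $|\Delta k|=2k$ (endfire geometry) is the only borderline case.

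\textbf{Step 3: window condition.} The rectangular mother window of width $\ell$ has transform
\[
\widehat w_0(\omega)=\ell\,\mathrm{sinc}(\omega\ell/2)\,e^{i\omega c},
\]
where the phase factor depends on the window's placement $c$. Since $b_i=\mathcal A_i\widehat w_0(\Delta k_i)e^{-i\Delta\phi_i}$, a zero of $\widehat w_0$ at some $\Delta k_i$ annihilates that target: $b_i=0$ and the component is undetectable. The first zero of $\widehat w_0$ occurs at $|\omega|\ell/2=\pi$, i.e.\ $|\omega|=2\pi/\ell$. Requiring this zero to lie strictly outside $[-2k,2k]$ gives $2\pi/\ell>4\pi/\lambda$, that is,
\[
\ell<\frac{\lambda}{2}.
\]
Under this condition $\widehat w_0(\Delta k_i)\neq 0$ for all admissible DoAs, so every $b_i\neq 0$ and all $N$ components remain identifiable.

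\textbf{Main obstacle.} The main difficulty is making "unambiguous" precise enough that both conditions are necessary and not merely sufficient. For $\Delta x$, I would exhibit two DoA configurations whose frequencies alias when $\Delta x>\lambda/4$. For $\ell$, I would exhibit a DoA with $\Delta k_i=2\pi/\ell$ whose target vanishes from the data when $\ell\ge\lambda/2$. Both constructions rely on the LO angle being arbitrary so that $|\Delta k|$ can actually reach $2k$. I would state this assumption explicitly; otherwise the bounds hold only as worst-case sufficient conditions.
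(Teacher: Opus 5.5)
Your proposal is correct and follows the paper's proof essentially step for step. You bound $|\Delta k_i|\le 2k$ and require $[-2k,2k]$ to fit inside a single Nyquist interval of width $2\pi/\Delta x$, which gives $\Delta x\le\lambda/4$; you then push the first null $2\pi/\ell$ of the rectangular window's sinc response strictly outside $[-2k,2k]$, which gives $\ell<\lambda/2$.

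Your remarks on the endfire endpoint and on necessity go beyond the paper, which only derives these as covering conditions. If you pursue necessity, note that $\max_{\theta_i}|\Delta k_i|=k(1+|\sin\theta_0|)$ reaches $2k$ only for an endfire LO ($\theta_0=\pm 90^\circ$, the paper's choice), not for an arbitrary LO angle. For such a fixed, known LO, complex aliasing alone would only demand $\Delta x\le\lambda/2$. What actually makes $\Delta x\le\lambda/4$ necessary is that the real cosine data cannot distinguish $\pm\Delta k$, so frequencies fold at $\Delta k\,\Delta x=\pi$.
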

\begin{proof}
     \noindent\emph{Proof:} See Appendix \ref{Appen.A}.
 \end{proof}
The sequence of measurements $\{\tilde{y}_j\}_{j=1}^K$ now represents a uniformly sampled signal composed of a sum of sinusoids.
\subsection{Analysis of the Integrated-Power Method as a Special Case}
\label{sec:relation_to_guo}
The proposed ISE framework not only enables multi-target detection but also provides a unified model to analyze and clarify existing single-target DoA estimation methods that rely on integrated power measurements \cite{guo2025aoa}. By specializing our model, we can demonstrate that the integrated-power method is a specific instance of our framework and, in doing so, reveal the physical origins of its inherent limitations, particularly regarding the vapor cell length.

An integrated-power method relies on measuring the total power of the probe laser transmitted through the entire vapor cell. This corresponds to a specialization of our virtual channel model, as defined in \eqref{eq:yj-def}, where only a single channel is used with a rectangular window function $w_1(x)$ that spans the entire cell length $L$:
\begin{equation}
    w_1(x) = 1, \quad \text{for } x \in [0, L].
\end{equation}
In this case, the single measurement $y_1$ is the total integrated absorption:
\begin{equation}
    y_1 = \int_0^L \alpha(x) \,dx = -\ln\left(\frac{P_{\rm out}}{P_{\rm in}}\right).
\end{equation}
For a single target, substituting our linearized absorption coefficient from \eqref{eq:alpha-cos} into the integral yields
\begin{align}
    y_1 &= \int_0^L \left[ \alpha_{\text{DC}} + \mathcal{A}_1 \cos(\Delta k_1 x - \Delta\phi_1) \right] dx \nonumber \\
    &= \alpha_{\text{DC}}L + \mathcal{A}_1 \int_0^L \cos(\Delta k_1 x - \Delta\phi_1) dx.
    \label{eq:integrated_absorption}
\end{align}
The total power transmission $T(\theta_1) = \exp(-y_1)$ is then used to estimate $\theta_1$. This derivation explicitly shows that the model in \cite{guo2025aoa} is mathematically equivalent to a single-channel, single-target specialization of the ISE framework under the strong LO approximation.

This unified model also allows us to analytically explain the empirically determined optimal cell length in such methods. The estimation relies on inverting the mapping from the DoA $\theta_1$ to the transmission $T(\theta_1)$. To avoid ambiguity and ensure a well-behaved, convex optimization landscape, this mapping should be monotonic. The core of this mapping is the integral term in \eqref{eq:integrated_absorption}, which, for $\Delta\phi_1 \approx 0$, is
\begin{equation}
    \int_0^L \cos(\Delta k_1 x) dx = L\,\frac{\sin(\Delta k_1 L)}{\Delta k_1 L} = L\,\mathrm{sinc}(\Delta k_1 L).
\end{equation}
This integral is the Fourier transform of the rectangular window function, which has a sinc-like shape. To maintain monotonicity, the argument of the sinc function must be confined to its main lobe. The first zero of the derivative of $\mathrm{sinc}(u)$ occurs at the first non-zero root of the transcendental equation $u = \tan(u)$, which is $u_1 \approx 4.493$. To ensure a monotonic response over the full range of possible angles, the maximum argument must be less than this root: $|\Delta k_1 L| < u_1$.
The spatial frequency range is bounded by $|\Delta k_1| \le 2k = 4\pi/\lambda_{\text{RF}}$. Therefore, the condition becomes
\begin{equation}
    (2k)L < u_1 \implies L < \frac{u_1}{2k} = \frac{4.493}{4\pi/\lambda_{\text{RF}}} \approx 0.358\lambda_{\text{RF}}.
\end{equation}
This result provides a clear analytical explanation for the empirically found optimum in \cite[Fig. 9-11]{guo2025aoa}, which is $L \approx \lambda_{\text{RF}}/3$. This cell length represents a trade-off: it must be long enough to resolve the interference pattern but short enough to keep the response monotonic.

However, this constraint is not inherent to Rydberg sensors in general. It forces a compromise between sensitivity and unambiguous range, and restricts the broadband potential of the sensor, as the optimal physical length $L$ changes with the wavelength $\lambda_{\text{RF}}$. In contrast, our ISE approach, by spatially resolving the interference pattern, is not bound by this monotonicity constraint. It effectively performs spectral analysis on the spatial signal, where a longer cell length $L$ corresponds to a larger sensing aperture, leading to improved spectral resolution and thus higher DoA accuracy. This key difference liberates the sensor from the restrictive cell-length dependency and restores its full broadband, high-resolution capabilities.
\subsection{Spectral Estimation via Prony's Method}
\label{sec:prony}
\begin{figure*}[htbp]
\centering
\includegraphics[width=0.8\textwidth]{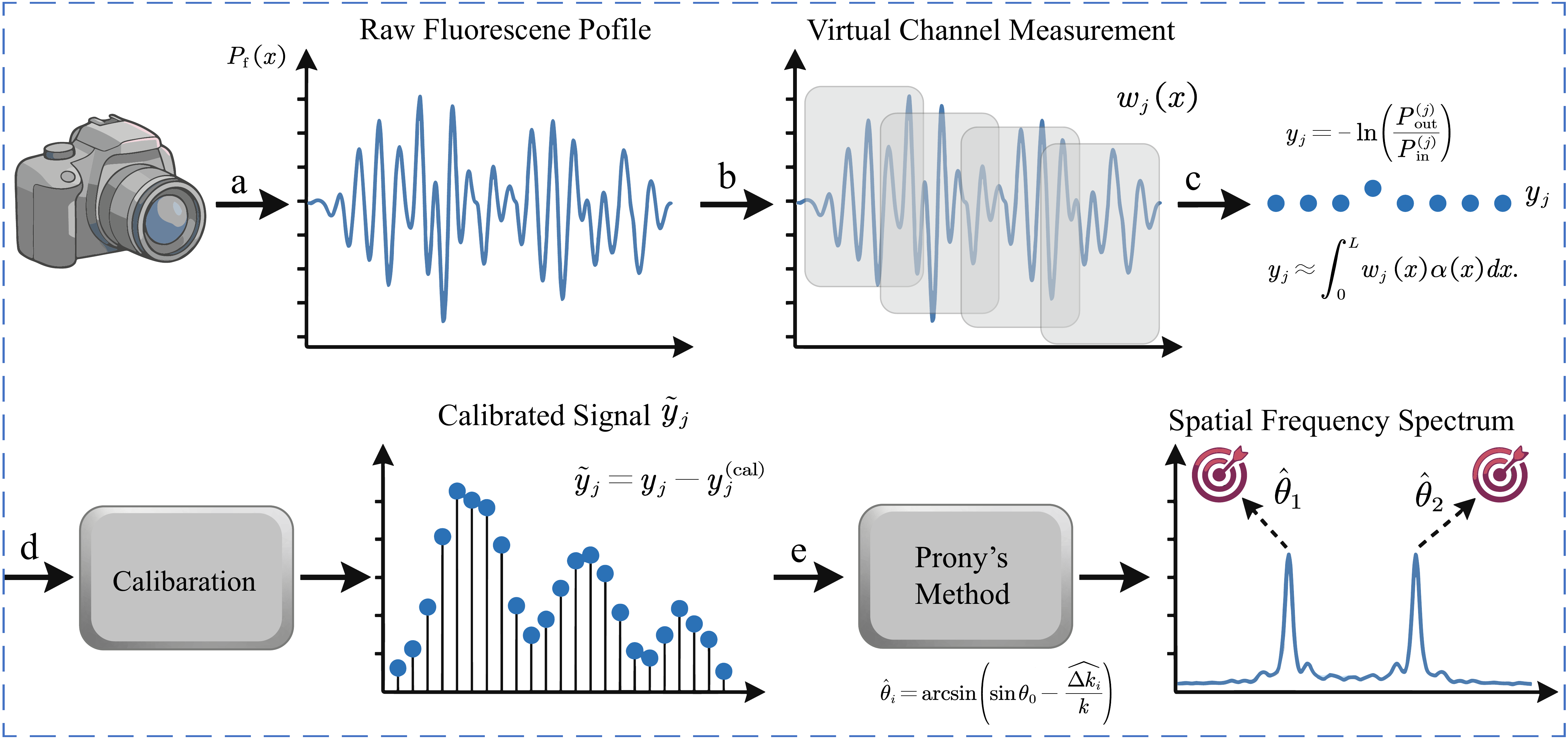}
\caption{Flow diagram of the proposed ISE approach. The process begins with (a) capturing the raw fluorescence profile. (b) A set of spatial windows are applied, and (c) the integrated absorption within each window yields the virtual channel measurements $y_j$. (d) A calibration step removes background effects, producing the calibrated signal $\tilde{y}_j$. (e) Prony's method is applied to the calibrated signal to estimate the spatial frequencies, which are then mapped to the final DoA estimates.}
\label{fig:flow_diagram}
\end{figure*}
 According to \eqref{eq:sum-of-exps}, the problem of estimating the DoAs $\{\theta_i\}$ has been successfully reduced to the classic problem of estimating the frequencies $\{\Delta k_i\}$ from the calibrated measurement sequence. The final step is to apply Prony's method to solve this spectral estimation problem. While traditionally sensitive to noise, its application is particularly well-suited to our context. Rydberg atomic sensors are fundamentally immune to the thermal noise that plagues conventional receivers, mitigating the primary drawback of the method. Furthermore, compared to other high-resolution algorithms that often require multiple snapshots to build a statistical covariance matrix, Prony's method offers the distinct advantages of having very low computational complexity and requiring only a single snapshot of the spatial data. This makes it highly efficient for this application. Once the spatial frequencies $\{\widehat{\Delta k}_i\}$ are estimated, the corresponding DoAs are recovered via the mapping in  \eqref{eq:field_intensity}.
 
Prony's method is a well-known technique for estimating the parameters of a sum of sinusoids from a series of samples. We apply it here to solve the estimation problem that arises from our physical model. As established in \eqref{eq:sum-of-exps}, the sequence of calibrated measurements $\{\tilde{y}_j\}$ for $j=1, \cdots, K$ is a sum of sinusoids with spatial sampling points $x_j = x_0 + (j-1)\Delta x$. The core principle of Prony's method is that such a signal is the solution to a linear constant-coefficient difference equation. This allows for the direct estimation of the frequencies $\{\Delta k_i\}$ from the $K$ samples.

First, the signal sequence $\tilde{y}_j$ is assumed to satisfy a $p$-th order linear prediction model, where $p \ge 2N$, as
\begin{equation}
\tilde y_j = -\sum_{m=1}^p a_m \tilde y_{j-m}, \quad j=p+1, \cdots, K.
\label{eq:prony-lpc}
\end{equation}
The prediction coefficients $\bm{a} = [a_1, \cdots, a_p]^\top$ are found by solving an overdetermined system of linear equations. This system is constructed by forming a Hankel matrix from the measurements, which is given by
\begin{equation}
\begin{bmatrix}
\tilde y_p & \tilde y_{p-1} & \cdots & \tilde y_1 \\
\tilde y_{p+1} & \tilde y_p & \cdots & \tilde y_2 \\
\vdots & \vdots & \ddots & \vdots \\
\tilde y_{K-1} & \tilde y_{K-2} & \cdots & \tilde y_{K-p}
\end{bmatrix}
\begin{bmatrix}
a_1 \\ a_2 \\ \vdots \\ a_p
\end{bmatrix}
=
-
\begin{bmatrix}
\tilde y_{p+1} \\ \tilde y_{p+2} \\ \vdots \\ \tilde y_K
\end{bmatrix}.
\end{equation}
This system is solved for $\bm{a}$ in a least-squares sense. With the coefficients determined, we form the characteristic polynomial given by
\begin{equation}
P(z) = 1 + \sum_{m=1}^p a_m z^{-m}.
\label{eq:prony-poly}
\end{equation}
The roots of this polynomial, $\{z_m\}_{m=1}^p$, contain the frequency information. Each spatial frequency $\Delta k_i$ corresponds to a pair of complex conjugate roots, $z_i$ and $z_i^*$, on the unit circle, such that $z_i = e^{j \Delta k_i \Delta x}$. The frequencies are therefore estimated by calculating the angle of the roots as
\begin{equation}
\widehat{\Delta k_i} = \frac{\arg(z_i)}{\Delta x}.
\end{equation}
We retain the unique positive frequencies. In the presence of noise, the roots may deviate from the unit circle; the signal roots are identified as those with magnitudes closest to 1, distinguishing them from spurious roots introduced by noise.

Once the set of spatial frequencies $\{\widehat{\Delta k_i}\}_{i=1}^N$ is estimated, the corresponding DoAs are recovered via the inverse mapping as
\begin{equation}
    \hat{\theta}_i = \arcsin\left(\sin\theta_0 - \frac{\widehat{\Delta k_i}}{k}\right).
\end{equation}
This final step recovers the desired directional information for all incident signals, completing the estimation process.

The computational complexity of Prony's method is dominated by two main steps. First, solving the overdetermined system of linear equations in \eqref{eq:prony-lpc} to find the linear prediction coefficients $\{a_m\}$. Using a standard least-squares solver, this step has a complexity of approximately $\mathcal{O}(p^2 K)$, where $p$ is the chosen model order and $K$ is the number of spatial samples. Second, finding the roots of the $p$-th degree characteristic polynomial in \eqref{eq:prony-poly}, which typically has a complexity of $\mathcal{O}(p^2)$. Since in practice $K > p$, the overall complexity is dominated by the first step and is approximately $\mathcal{O}(p^2 K)$. This is significantly more efficient than many other spectral estimation algorithms, especially those requiring covariance matrix estimation, making it well-suited for real-time applications.

The entire process, from fluorescence imaging to DoA recovery, is summarized in Fig.~\ref{fig:flow_diagram}.

\section{Performance Analysis: Cramér-Rao Lower Bound}\label{sec:crlb}
To establish a fundamental benchmark for the performance of our proposed DoA estimation approach, we derive the CRLB. The CRLB provides the minimum achievable variance for  unbiased estimators.

\subsection{Statistical Model and Fisher Information Matrix}
Our estimation problem is based on the calibrated measurement vector $\tilde{\bm{y}} = [\tilde{y}_1, \cdots, \tilde{y}_K]^\top$. We model the measurements as the true signal plus additive noise, expressed as
\begin{equation}
    \tilde{\bm{y}} = \bm{\mu}(\bm{\vartheta}) + \bm{n},
\end{equation}
where $\bm{\mu}(\bm{\vartheta}) = \mathbb{E}[\tilde{\bm{y}}]$ is the mean vector from \eqref{eq:lin-mixture}, $\bm{n}$ is a zero-mean Gaussian noise vector with covariance matrix $\bm{\Sigma}_y$ that encapsulates various noise sources (e.g., photon shot noise, laser noise, and electronic noise), and $\bm{\vartheta}$ is the vector of all unknown parameters for the $N$ targets, given by
\begin{equation}
    \bm{\vartheta} = [\bm{\Delta k}^\top, \bm{\Delta\phi}^\top, \bm{\mathcal{A}}^\top]^\top \in \mathbb{R}^{3N},
\end{equation}
with $\bm{\Delta k} = [\Delta k_1, \cdots, \Delta k_N]^\top$, $\bm{\Delta\phi} = [\Delta\phi_1, \cdots, \Delta\phi_N]^\top$, and $\bm{\mathcal{A}} = [\mathcal{A}_1, \cdots, \mathcal{A}_N]^\top$.
For a Gaussian model where the covariance is independent of the parameters, the Fisher Information Matrix (FIM) is given by \cite{kay1993fundamentals}
\begin{equation}
    \bm{I}(\bm{\vartheta}) = \left(\frac{\partial \bm{\mu}(\bm{\vartheta})}{\partial \bm{\vartheta}}\right)^\top \bm{\Sigma}_y^{-1} \left(\frac{\partial \bm{\mu}(\bm{\vartheta})}{\partial \bm{\vartheta}}\right).
    \label{eq:FIM-general}
\end{equation}
The FIM quantifies the amount of information the measurement vector $\tilde{\bm{y}}$ provides about the unknown parameters $\bm{\vartheta}$, and its inverse sets the lower bound on the variance of any unbiased estimator. The Jacobian matrix $\frac{\partial \bm{\mu}}{\partial \bm{\vartheta}}$ can be constructed from the partial derivatives with respect to each parameter. Let $\bm{c}_i$, $\bm{s}_i$, and $\bm{t}_i$ be $K \times 1$ vectors whose $j$-th elements are
\begin{small}
    \begin{equation}
\begin{aligned}
    [\bm{c}_i]_j &= \int_0^L w_j(x)\cos(\Delta k_i x-\Delta\phi_i)\,dx, \\
    [\bm{s}_i]_j &= \int_0^L w_j(x)\sin(\Delta k_i x-\Delta\phi_i)\,dx, \\
    [\bm{t}_i]_j &= -\int_0^L w_j(x) x \sin(\Delta k_i x-\Delta\phi_i)\,dx.
\end{aligned}
\end{equation}
\end{small}
The partial derivatives of the mean vector are then given by
\begin{small}
    \begin{equation}
    \begin{aligned}
    \frac{\partial \bm{\mu}}{\partial \mathcal{A}_i} &= \bm{c}_i, \\
    \frac{\partial \bm{\mu}}{\partial \Delta\phi_i} &= \mathcal{A}_i \bm{s}_i, \\
    \frac{\partial \bm{\mu}}{\partial \Delta k_i} &= \mathcal{A}_i \bm{t}_i.
\end{aligned}
\end{equation}
\end{small}
Using these, the FIM can be expressed in a $3 \times 3$ block matrix form, where each block is an $N \times N$ matrix. Defining the inner product as $\langle\bm{a},\bm{b}\rangle_{\bm{\Sigma}^{-1}} \triangleq \bm{a}^\top\bm{\Sigma}_y^{-1}\bm{b}$, the blocks of the FIM are given by 
\begin{equation}
\begin{split}
    [\bm{I}_{\bm{\mathcal{A}}\bm{\mathcal{A}}}]_{i\ell} &= \langle\bm{c}_i,\bm{c}_\ell\rangle_{\bm{\Sigma}^{-1}}, \\
    [\bm{I}_{\bm{\Delta\phi}\bm{\Delta\phi}}]_{i\ell} &= \mathcal{A}_i\mathcal{A}_\ell\,\langle\bm{s}_i,\bm{s}_\ell\rangle_{\bm{\Sigma}^{-1}}, \\
    [\bm{I}_{\bm{\Delta k}\bm{\Delta k}}]_{i\ell} &= \mathcal{A}_i\mathcal{A}_\ell\,\langle\bm{t}_i,\bm{t}_\ell\rangle_{\bm{\Sigma}^{-1}}, \\
    [\bm{I}_{\bm{\mathcal{A}}\bm{\Delta\phi}}]_{i\ell} &= \mathcal{A}_\ell\,\langle\bm{c}_i,\bm{s}_\ell\rangle_{\bm{\Sigma}^{-1}}, \\
    [\bm{I}_{\bm{\mathcal{A}}\bm{\Delta k}}]_{i\ell} &= \mathcal{A}_\ell\,\langle\bm{c}_i,\bm{t}_\ell\rangle_{\bm{\Sigma}^{-1}}, \\
    [\bm{I}_{\bm{\Delta\phi}\bm{\Delta k}}]_{i\ell} &= \mathcal{A}_i\mathcal{A}_\ell\,\langle\bm{s}_i,\bm{t}_\ell\rangle_{\bm{\Sigma}^{-1}}.
\end{split}
\end{equation}
where $i, \ell = 1, \cdots, N$.
The block structure of the FIM provides insight into the estimation problem. The diagonal blocks, e.g., $\bm{I}_{\bm{\Delta k}\bm{\Delta k}}$, represent the information available for estimating a specific set of parameters. The off-diagonal blocks, e.g., $\bm{I}_{\bm{\Delta k}\bm{\Delta\phi}}$, quantify the statistical coupling between different sets of parameters. A large off-diagonal coupling implies that the estimation of one parameter set is highly correlated with the estimation of another, making it more challenging to distinguish their individual effects. For instance, if two targets are closely spaced in angle, their corresponding columns in the Jacobian matrix become nearly collinear, resulting in an ill-conditioned FIM and a higher variance for their estimates, which aligns with the physical intuition of reduced resolution.
The CRLB for the entire parameter vector $\bm{\vartheta}$ is given by the inverse of the FIM, i.e., $\mathrm{Cov}(\hat{\bm{\vartheta}}) \succeq \bm{I}(\bm{\vartheta})^{-1}$.
\vspace{-0.3cm}
\subsection{CRLB for Angle of Arrival}
In our problem, the parameters of interest are the spatial frequencies $\bm{\Delta k}$, which directly map to the DoAs $\bm{\theta}$. The amplitudes $\bm{\mathcal{A}}$ and phases $\bm{\Delta\phi}$ are considered nuisance parameters. The CRLB for $\bm{\Delta k}$ is found by first computing the effective FIM for $\bm{\Delta k}$, which is obtained by inverting the full FIM $\bm{I}(\bm{\vartheta})$ and taking the sub-block corresponding to $\bm{\Delta k}$. This is equivalent to using the Schur complement, which is
\begin{equation}
    \bm{I}_{\text{eff}}^{(\bm{\Delta k})} = \bm{I}_{\bm{\Delta k}\bm{\Delta k}} - \bm{I}_{\bm{\Delta k}\bm{\eta}} \bm{I}_{\bm{\eta}\bm{\eta}}^{-1} \bm{I}_{\bm{\eta}\bm{\Delta k}},
    \label{eq:EFIM-general}
\end{equation}
where $\bm{\eta} = [\bm{\Delta\phi}^\top, \bm{\mathcal{A}}^\top]^\top$ is the vector of nuisance parameters. Physically, the Schur complement accounts for the performance degradation that arises from having to estimate the nuisance parameters $\bm{\eta}$ simultaneously with the parameters of interest $\bm{\Delta k}$. The term subtracted from $\bm{I}_{\bm{\Delta k}\bm{\Delta k}}$ represents the information about $\bm{\Delta k}$ that is lost because of uncertainty in $\bm{\eta}$. The CRLB for any unbiased estimator $\widehat{\bm{\Delta k}}$ is then given by
\begin{equation}
    \mathrm{Cov}(\widehat{\bm{\Delta k}}) \succeq \left(\bm{I}_{\text{eff}}^{(\bm{\Delta k})}\right)^{-1}.
\end{equation}
Finally, the CRLB for the DoA vector $\bm{\theta}$ is obtained by a change of variables. The Fisher information matrix for $\bm{\theta}$ is related to that of $\bm{\Delta k}$ via the Jacobian of the parameter transformation, where
\begin{equation}
    \bm{I}_{\text{eff}}^{(\bm{\theta})} = \bm{J}^\top \bm{I}_{\text{eff}}^{(\bm{\Delta k})} \bm{J},
\end{equation}
where $\bm{J}$ is the Jacobian matrix of the transformation from $\bm{\theta}$ to $\bm{\Delta k}$. From the relation $\Delta k_i = k(\sin\theta_0 - \sin\theta_i)$, this Jacobian is a diagonal matrix given by
\begin{equation}
    \bm{J} = \frac{\partial \bm{\Delta k}}{\partial \bm{\theta}} = \mathrm{diag}(-k\cos\theta_1, \cdots, -k\cos\theta_N).
\end{equation}
The CRLB for $\bm{\theta}$ is the inverse of its effective FIM, which is given by
\begin{equation}
    \mathrm{Cov}(\widehat{\bm{\theta}}) \succeq \left(\bm{I}_{\text{eff}}^{(\bm{\theta})}\right)^{-1} = \bm{J}^{-1} \left(\bm{I}_{\text{eff}}^{(\bm{\Delta k})}\right)^{-1} (\bm{J}^{-1})^\top.
    \label{eq:CRLB-theta-matrix}
\end{equation}
For the single-target case ($N=1$), this simplifies to
\begin{equation}
    \mathrm{var}(\hat{\theta}) \ge \frac{1}{k^2\cos^2\theta} \cdot \frac{1}{I_{\text{eff}}^{(\Delta k)}}.
    \label{eq:CRLB-theta-single}
\end{equation}
 This expression provides a clear physical interpretation of the fundamental limits of the estimation. The term $1/(k^2\cos^2\theta)$ is a purely geometric factor. Since $k=2\pi/\lambda$, the $k^2$ term shows that the estimation precision improves quadratically with decreasing wavelength, as expected. The $\cos^2\theta$ term reveals the strong dependence on the DoA itself. The precision is highest for signals arriving at broadside, i.e., $\theta=0$, and degrades significantly as the angle approaches end-fire i.e., $\theta \to \pm 90^\circ$, where the bound diverges. This is because the projected spatial frequency $\Delta k_i = k(\sin\theta_0 - \sin\theta_i)$ becomes insensitive to changes in $\theta_i$ near the ends of the cell. The second term, $1/I_{\text{eff}}^{(\Delta k)}$, encapsulates the performance contribution of the physical system itself, including the signal-to-noise ratio via $\mathcal{A}_i$ and $\bm{\Sigma}_y$, the number of virtual sensor elements $K$, and the spatial sampling strategy through the window functions and their spacing.

\section{Simulation Results}\label{IV}
To validate the theoretical framework, we conduct a series of numerical simulations. This section first details the physical parameters, which are consistent with established experimental setups \cite{guo2025aoa, holloway2014broadband}, and then presents the results that visually and quantitatively support our proposed approach.

\subsection{Simulation Parameters}
The parameters used in our simulations are based on a realistic four-level Rb atomic system and are consistent with those reported in prior experimental and theoretical works \cite{guo2025aoa, holloway2014broadband}. The key parameters are summarized in Table \ref{tab:sim_params}.

\begin{table}[h]
\centering
\caption{Simulation Parameters}
\label{tab:sim_params}
\begin{tabular}{@{}lll@{}}
\toprule
\textbf{Parameter} & \textbf{Symbol} & \textbf{Value} \\ \midrule
\multicolumn{3}{l}{\textbf{Atomic and Laser Properties}} \\
Atom Density & $N_a$ & $4.13 \times 10^{13}$ m$^{-3}$ \\
Probe Wavelength & $\lambda_{pr}$ & 780.24 nm \\
Coupling Wavelength & $\lambda_{cp}$ & 480 nm \\
Probe Dipole Moment & $\mu_{p}$ & $1.06 \times 10^{-29}$ C$\cdot$m \\
RF Dipole Moment & $\mu_{RF}$ & $7.85 \times 10^{-26}$ C$\cdot$m \\
Decay Rate & $\gamma_{21}$ & $2\pi \times 6.066$ MHz \\
Coupling Rabi Freq. & $\Omega_c$ & $2\pi \times 40$ MHz \\
Coupling Detuning & $\Delta_c$ & $2\pi \times 10$ kHz \\
\midrule
\multicolumn{3}{l}{\textbf{RF Signal Properties}} \\
RF Carrier Frequency & $f_c$ & 2.03 GHz \\
LO Angle of Arrival & $\theta_0$ & $90^\circ$ \\
\bottomrule
\end{tabular}
\end{table}
\begin{figure}[h]
\centering
\includegraphics[width=\columnwidth]{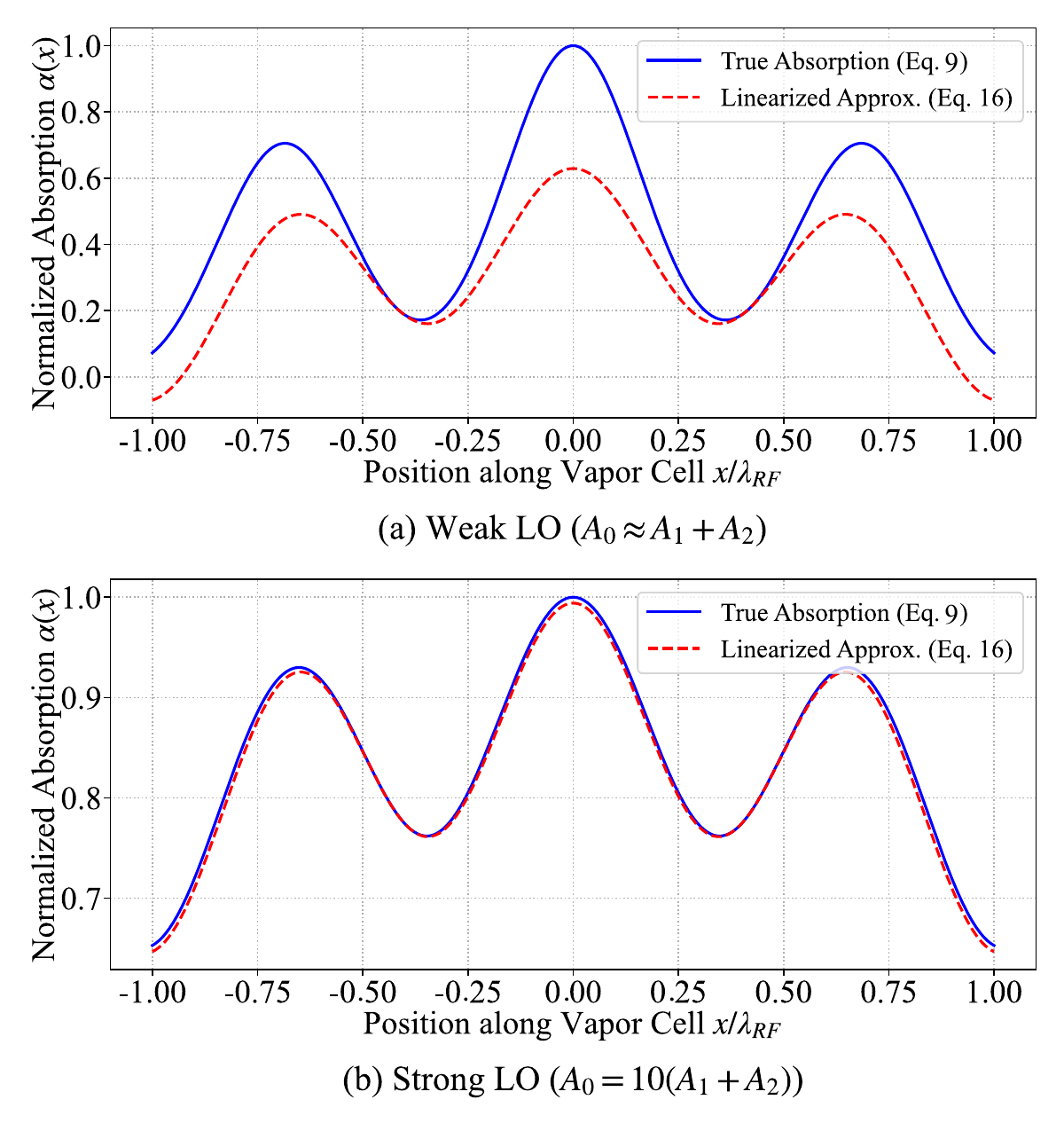}
  \caption{Validation of the linearization assumption. The true absorption coefficient is compared to the linearized model under (a) weak LO and (b) strong LO conditions.}
  \label{fig:linearization_validation}
\end{figure}
\subsection{Validation of the Linearization Assumption}
A foundation of our proposed approach is the linearization of the atomic absorption coefficient under a LO-dominant regime, as formulated in \eqref{eq:alpha-cos}. To  verify this critical assumption, we simulate a two-target scenario and compare the true, non-linear absorption profile with its first-order linear approximation.

Fig. \ref{fig:linearization_validation} illustrates this comparison under two distinct conditions: a   ``weak LO" scenario where the LO amplitude is comparable to the sum of the target signal amplitudes, i.e., $A_0 \approx A_1 + A_2$, and a ``strong LO" scenario where the LO is an order of magnitude stronger, e.g., $A_0 = 10(A_1 + A_2)$.

Fig. \ref{fig:linearization_validation} illustrates this comparison under two distinct conditions. In the weak LO case, where the LO amplitude is comparable to the sum of the target signal amplitudes, the true absorption curve deviates significantly from the linearized model. The waveform is distorted, reflecting the influence of non-negligible signal-signal cross-terms and other higher-order nonlinearities that are not captured by the first-order Taylor expansion.

Conversely, in the strong LO case, where the LO is an order of magnitude stronger, the agreement between the true absorption and the linearized model is striking, as shown in Fig. \ref{fig:linearization_validation}(b). The solid blue line almost perfectly overlays the dashed red line. This demonstrates that when the LO field is dominant, the cross-terms and higher-order effects are effectively suppressed, and the local absorption coefficient $\alpha(x)$ indeed behaves as a simple superposition of a DC offset and several cosine functions. This result provides strong validation for our central hypothesis: under LO-dominance, the multi-target DoA problem can be elegantly and accurately transformed into a spectral estimation problem.

\subsection{Multi-Target Resolution and Impact of LO Strength}

\begin{figure}[t]
\centering
\includegraphics[width=\columnwidth]{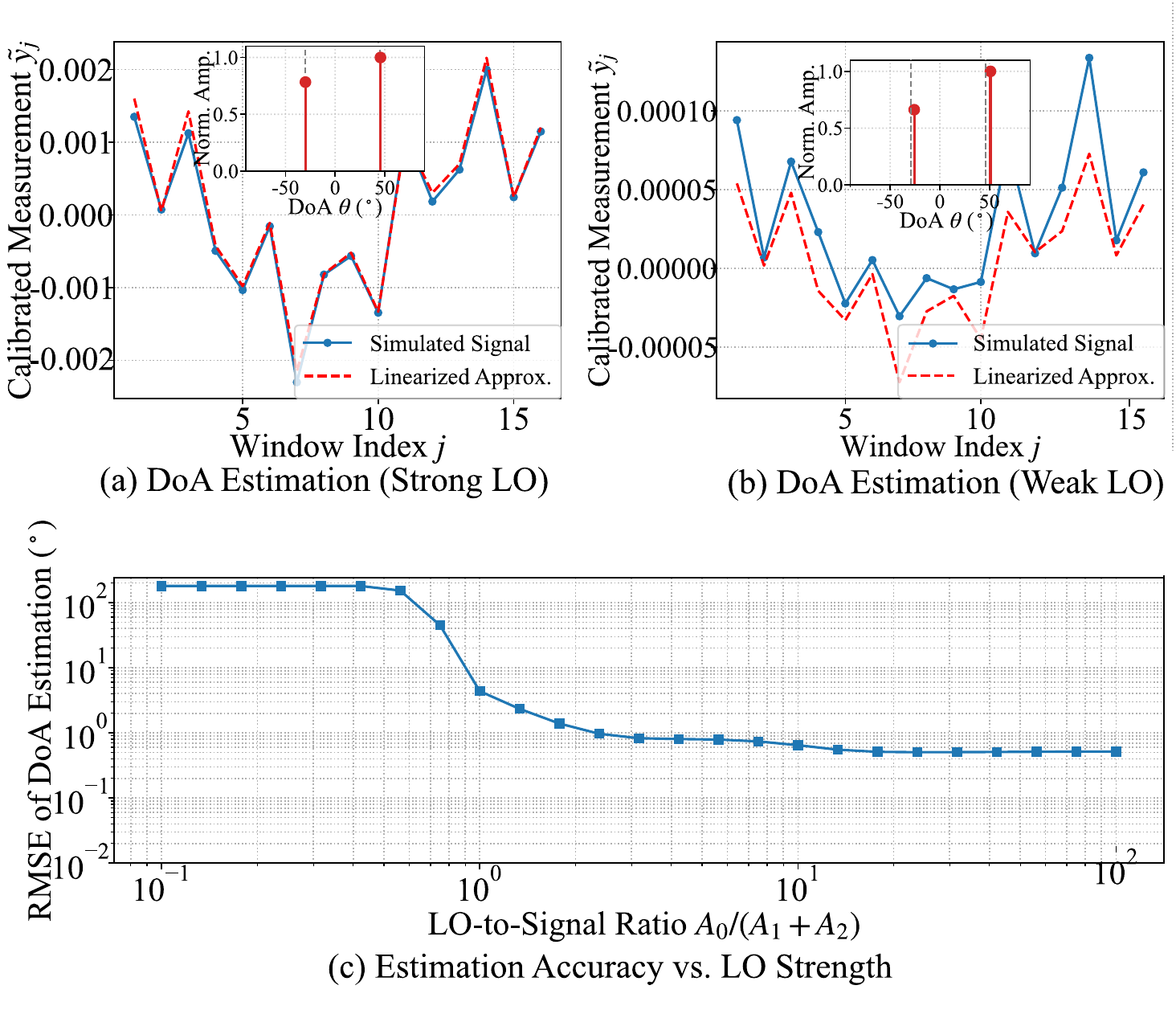}
\caption{Demonstration of the spectral estimation approach and the impact of LO strength. (a) Accurate DoA estimation under strong LO. (b) Degraded estimation under weak LO. (c) RMSE of DoA estimation versus LO-to-signal ratio.}
\label{fig:ise_prony_demo}
\end{figure}

Fig.~\ref{fig:ise_prony_demo} presents a comprehensive demonstration of the proposed approach's capability to resolve multiple targets and highlights the critical role of the LO strength. The simulation involves two incoming signals with true DoAs at $\theta_1 = -30^\circ$ and $\theta_2 = 45^\circ$, and an SNR of 30~dB. For these simulations, the vapor cell length is set to $L=4\lambda_{\text{RF}}$, the fluorescence profile is sampled into $K=16$ virtual channels, and the LO signal is directed at $\theta_0=90^\circ$.

Fig.~\ref{fig:ise_prony_demo}(a) shows the result under a strong LO condition, where the LO-to-signal amplitude ratio is 20. The main plot displays the calibrated spatial measurements, $\tilde{y}_j$. The simulated signal aligns almost perfectly with the prediction from our linearized model, confirming the validity of the approximation in this regime. The inset provides the corresponding DoA estimation from Prony's method, where the estimated angles show excellent agreement with the true DoAs which is presented by gray dashed lines, validating the method's accuracy.

In contrast, Fig.~\ref{fig:ise_prony_demo}(b) illustrates the outcome under a weak LO condition, with a ratio of 1. In this scenario, the linearization assumption fails, causing the spatial measurement waveform to be visibly distorted and deviate  from the sum-of-sinusoids model. This discrepancy is a clear manifestation of the non-linear effects that our theory aims to suppress. As a direct consequence, the accuracy of the DoA estimates shown in the inset is degraded, resulting in a larger estimation error.

Fig.~\ref{fig:ise_prony_demo}(c) provides the quantitative underpinning for these observations. It plots the Root-Mean-Square Error (RMSE) of the DoA estimation as a function of the LO-to-signal ratio, based on 100 Monte Carlo trials per point. The curve shows that for ratios below 1, the error is extremely high, indicating a failure of the estimation. As the ratio increases, the RMSE drops dramatically, eventually reaching a floor determined by the system's SNR for ratios greater than approximately 10. This result not only explains the success of the estimation in Fig.~\ref{fig:ise_prony_demo}(a) and the degradation in Fig.~\ref{fig:ise_prony_demo}(b) but also provides a crucial design guideline for practical implementation: to ensure high-accuracy DoA estimation, the LO power must be at least an order of magnitude greater than the expected total signal power.
\subsection{RMSE Performance vs. SNR}
\begin{figure}[htbp]
\centering
\includegraphics[width=\columnwidth]{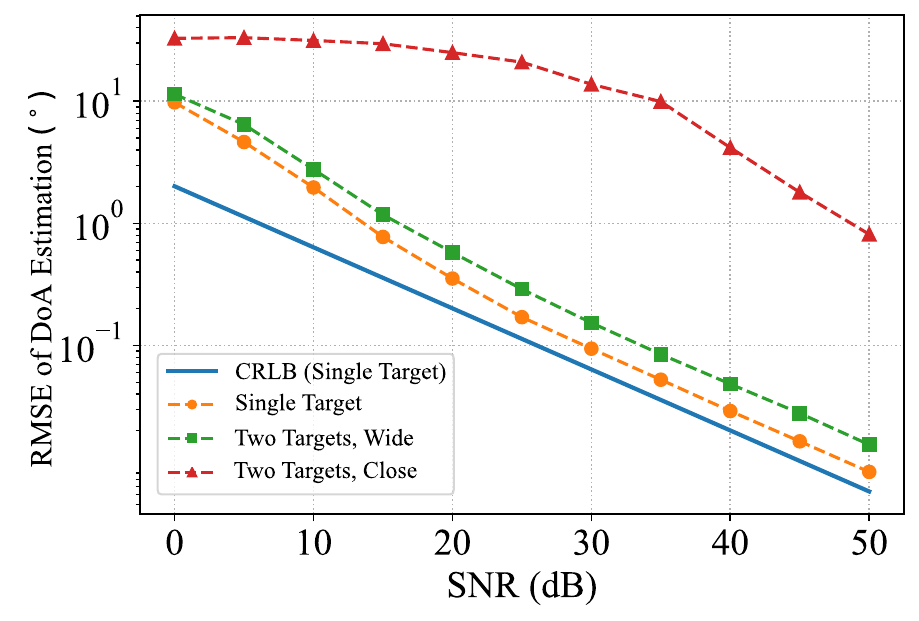}
\caption{RMSE of DoA estimation versus SNR for single and multiple targets. The performance of Prony's method is compared against the CRLB for the single-target case.}
\label{fig:rmse_vs_snr}
\end{figure}
Fig.~\ref{fig:rmse_vs_snr} evaluates the performance of the proposed approach by plotting the RMSE of the DoA estimation as a function of SNR. The simulation compares three distinct scenarios: a single target at $15^\circ$, two widely separated targets at $-15^\circ$ and $15^\circ$, and two closely separated targets at $15^\circ$ and $20^\circ$. For reference, the theoretical CRLB for the single-target case is also plotted, representing the fundamental lower bound on estimation variance.

The results clearly demonstrate that for the single-target case, the RMSE of the proposed approach closely follows the CRLB, especially at higher SNRs, confirming the near-optimal performance of our approach. As expected, the estimation for two targets requires a higher SNR to achieve the same level of accuracy as the single-target case. Furthermore, the RMSE for the two closely-spaced targets is consistently higher than for the widely-spaced targets, illustrating the increased difficulty in resolving targets as their angular separation decreases. These findings are in great agreement with the principles of spectral estimation and validate the robustness and effectiveness of the proposed approach for multi-target DoA estimation.

\subsection{Impact of Aperture Size and Broadband Performance}
\begin{figure}[h]
\centering
\includegraphics[width=\columnwidth]{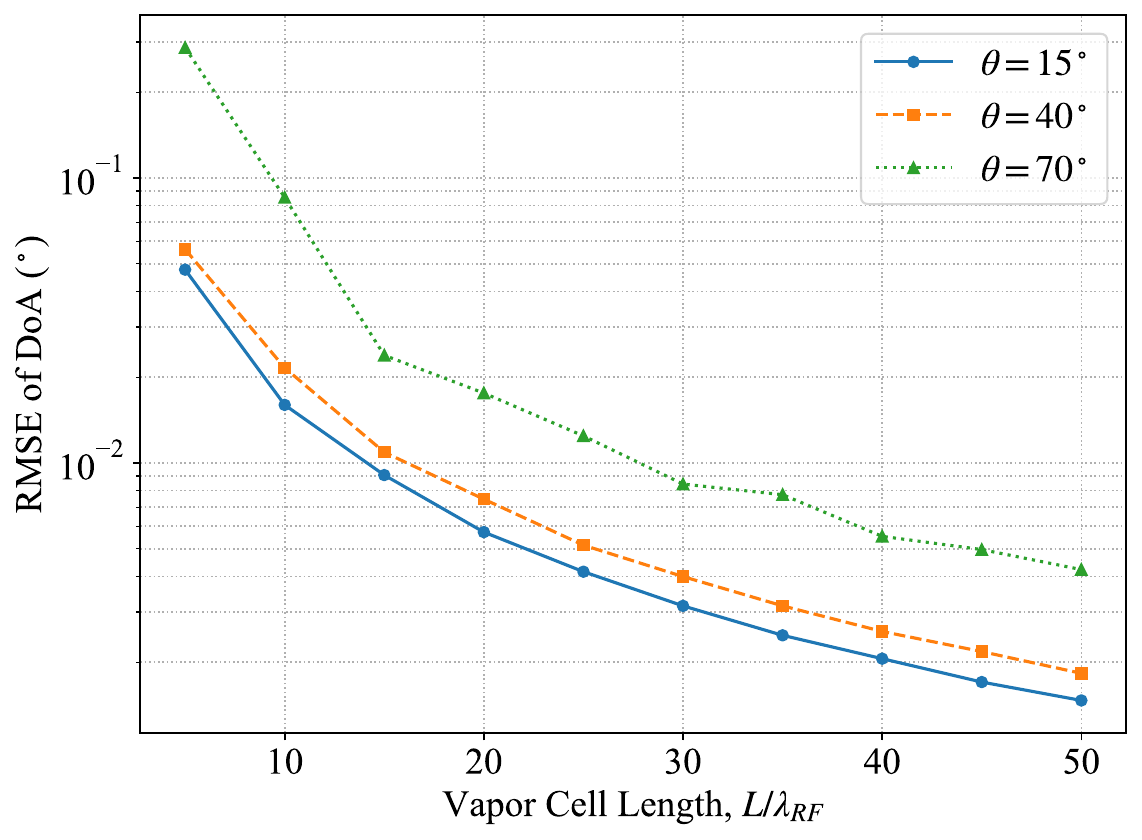}
\caption{RMSE of DoA Estimation vs. Normalized Vapor Cell Length.}
\label{fig:crlb_vs_length}
\end{figure}
Fig.~\ref{fig:crlb_vs_length} illustrates the relationship between the DoA estimation accuracy and the physical size of the sensing medium, a key aspect of our proposed method. The simulation calculates the CRLB for the RMSE of DoA estimation as a function of the vapor cell length, normalized to the RF wavelength. The simulation was performed for three different true signal angles with a fixed SNR of 30~dB, a LO angle of $90^\circ$, and a spatial sampling interval of $\Delta x = \lambda/4$.

The results show a clear and consistent trend: for all angles, the RMSE steadily decreases as the normalized cell length increases. This behavior is attributed to the increase in the effective sensing aperture. A longer cell allows for the collection of more spatial cycles of the interference pattern, providing a richer dataset for spectral estimation and thus leading to a more precise DoA estimate. This result stands in contrast to previous single-receiver methods that are  sensitive to a specific, optimal cell length. Our approach, by leveraging spatially-resolved measurements, removes this restrictive dependency and demonstrates robust performance improvement with increasing aperture size. Crucially, this highlights the inherent broadband capability of our method. For a fixed physical cell length $L$, higher frequency signals will correspond to a larger electrical length $L/\lambda_{\text{RF}}$. As the figure shows, a larger electrical length directly translates to improved estimation accuracy. This means a single, fixed-length Rydberg sensor can achieve progressively higher precision when operating at higher frequencies, a significant advantage for broadband applications.

\subsection{Validation of Sampling Constraints}
\begin{figure}[h]
\centering
\includegraphics[width=\columnwidth]{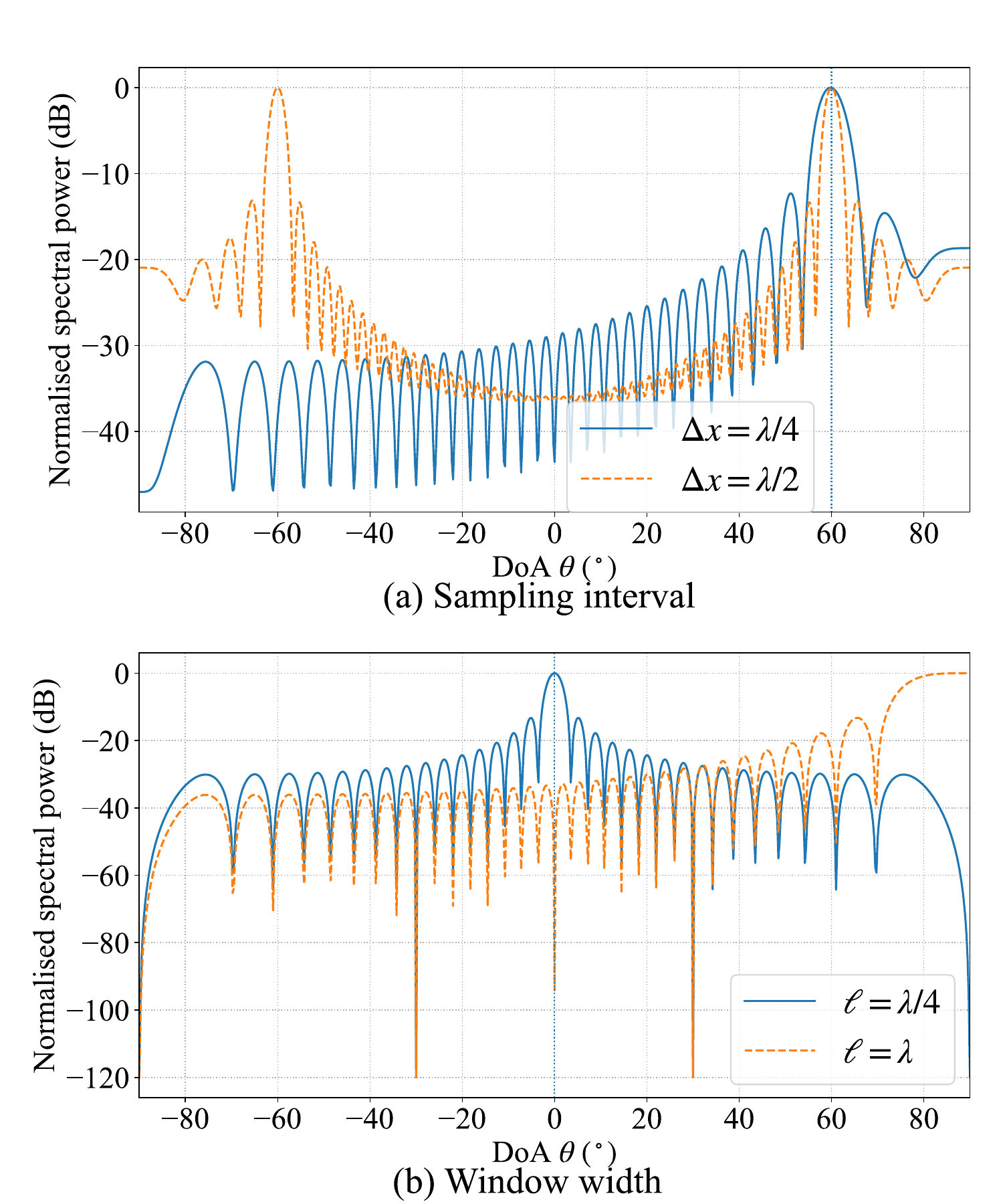}
\caption{Validation of the sampling theorem constraints for (a) sampling interval $\Delta x$ and (b) window width $\ell$.}
\label{fig:sampling_window}
\end{figure}
Fig.~\ref{fig:sampling_window} provides a visual confirmation of the sampling constraints derived in Theorem~\ref{thm:sampling}, which are critical for the unambiguous estimation of DoA. The plots show the normalized spectral power as a function of the angle of arrival, calculated from the signal model in \eqref{eq:sum-of-exps}.

Fig.~\ref{fig:sampling_window}(a) investigates the impact of the spatial sampling interval, $\Delta x$, which is the distance between the centers of the virtual measurement windows. For this simulation, a single signal with a true DoA of $60^\circ$ is used. Theorem~\ref{thm:sampling} establishes that to prevent spatial aliasing, this interval must satisfy $\Delta x \le \lambda/4$. The simulation compares two cases. When the sampling interval is set to $\Delta x = \lambda/4$, which adheres to the Nyquist criterion, the resulting spectrum  correctly identifies the signal's true direction with a clear peak at $60^\circ$. However, when the sampling interval is increased to $\Delta x = \lambda/2$, violating the criterion, the spectrum is corrupted by aliasing. This is shown by the dashed orange line, where the signal power is incorrectly mirrored to a false DoA near $-60^\circ$, demonstrating a complete failure of the estimation.

Fig.~\ref{fig:sampling_window}(b) illustrates the effect of the rectangular window width, $\ell$, by simulating a target at a DoA of $0^\circ$. The theorem dictates that the width must be less than half a wavelength, $\ell < \lambda/2$, to avoid creating nulls in the window's spectral response, which would lead to blind spots. When a narrow window width of $\ell = \lambda/4$ is used, the spectral response  correctly shows a peak at $0^\circ$, detecting the target. In contrast, when the window width is set to $\ell = \lambda$, a null in the window's Fourier transform  is created at the exact spatial frequency corresponding to the $0^\circ$ DoA. This creates a blind spot, completely suppressing the signal from the target, as seen by the deep null in the spectral power. This confirms that an improper window choice can make the sensor blind to signals from certain directions.

\section{Discussion}\label{sec:discussion}
Although the proposed ISE framework treats the fluorescence samples as ideal spatial measurements of $\alpha(x)$, implementing such sampling in a real Rydberg receiver may raise several hardware-level challenges. The first requirement is to obtain sufficient independent spatial samples along the vapor cell while respecting the aliasing constraints in Theorem~\ref{thm:sampling}. In practice, the spatial resolution is set by the imaging optics and the camera pixel size. Light-sheet fluorescence experiments with Rydberg vapors have demonstrated two-dimensional field imaging with $\sim 100\text{--}200~\mu\text{m}$ spatial resolution over centimeter-scale fields of view using standard sCMOS cameras \cite{schlossberger2024two}, which significantly exceeds the $\Delta x \le \lambda/4$ requirement at microwave wavelengths and allows for substantial oversampling. 

The LO-dominant regime assumed in our linearization requires a relatively strong and well-controlled RF LO field inside the cell. High-power microwave Rydberg experiments have shown that room-temperature vapor cells can tolerate strong fields while retaining a well-behaved spectroscopic response, provided that power broadening and transit-time effects are carefully managed \cite{anderson2016optical}. Combining these observations, existing fluorescence-based Rydberg imagers already satisfy most hardware requirements of ISE. Moreover, angle-of-arrival measurements based purely on fluorescence images of RF standing waves have recently achieved degree-level angular uncertainties at GHz carrier frequencies \cite{schlossberger2025angle}. These demonstrations indicate that implementing spatially resolved fluorescence sampling for multi-target DoA estimation is not limited by any fundamental hardware barrier, but rather by incremental engineering of the imaging optics, LO delivery, and camera readout chain.

\section{Conclusion}\label{V}
In this paper, we have presented a novel approach for multi-target DoA estimation using a single Rydberg atomic receiver, which addresses the limitations of complex receiver arrays and the single-target, narrow-band constraints of previous methods. The approach is based on spatially resolving the fluorescence profile within the vapor cell, which reduces the complex multi-target problem to a classical spectral estimation task. We demonstrated that under a strong LO-dominant regime, the atomic absorption pattern linearizes into a superposition of spatial sinusoids, where each spatial frequency directly maps to a target's DoA. This allows for the use of established algorithms like Prony's method to resolve multiple targets. Our ISE approach inherently supports multi-target detection and enables broadband operation by removing the restrictive dependency on a specific cell length. Simulations validated our theoretical model, showing near-CRLB performance and confirming that an LO-to-signal power ratio of at least 10 is important for high accuracy. This work offers an effective approach for high-precision, broadband DoA estimation and suggests a potential path toward advanced concepts such as multi-channel Rydberg receivers and the continuous-aperture sensing required for holographic MIMO. Future work will focus on the application of this approach in wireless communication systems.

\appendices
\section{Proof of Theorem~1}\label{Appen.A}
\begin{proof}
Under the LO–dominant linearization in Sec.~III-A, the calibrated channel outputs $\{\tilde y_j\}_{j=1}^K$ can be written as a sum of spatial sinusoids sampled on a 1-D grid,
\begin{equation}
    \tilde y_j
    = \Re\!\bigg\{\sum_{i=1}^N b_i e^{\mathrm{j}\Delta k_i x_j}\bigg\},
    \qquad x_j = x_0 + (j-1)\Delta x,
    \label{eq:app_sig_model}
\end{equation}
where $\Delta k_i$ is the spatial frequency associated with the $i$-th DoA and $b_i$ collects all direction–independent factors.

\subsubsection*{1) Constraint on $\Delta x$ }
Define the discrete–time index $n = j-1$ and the corresponding digital frequency
\[
    \omega_i \triangleq \Delta k_i \Delta x.
\]
Ignoring the irrelevant phase $e^{\mathrm{j}\Delta k_ix_0}$,
\eqref{eq:app_sig_model} is equivalent to
\begin{equation}
    s[n] = \sum_{i=1}^N c_i e^{\mathrm{j}\omega_i n},
    \qquad n = 0,\cdots,K-1,
\end{equation}
with $c_i = b_i e^{\mathrm{j}\Delta k_i x_0}$. In this discrete representation, spatial frequencies are only identifiable \emph{modulo} $2\pi$, i.e., the exponentials generated by
$\Delta k$ and
\[
    \Delta k' = \Delta k + \frac{2\pi q}{\Delta x},
    \qquad q \in \mathbb{Z},
\]
produce exactly the same sampled sequence because
$e^{\mathrm{j}(\omega + 2\pi q)n} = e^{\mathrm{j}\omega n}$ for
all integer $n$. Thus, to avoid aliasing, all physically admissible $\Delta k_i$ must lie inside a single Nyquist interval of width $2\pi/\Delta x$.

From the physical DoA model we have
\begin{equation}
    \Delta k_i = k\bigl(\sin\theta_0 - \sin\theta_i\bigr),
    \qquad k = \frac{2\pi}{\lambda}.
    \label{eq:app_deltak}
\end{equation}
For far-field incidence we restrict $\theta_i \in [-\pi/2,\pi/2]$, so that $\sin\theta_0 - \sin\theta_i \in [-2,2]$ and hence
\begin{equation}
    \Delta k_i \in [-2k,2k], \quad \forall i.
\end{equation}
Requiring the whole interval $[-2k,2k]$ to be contained within a single Nyquist interval $[-\pi/\Delta x,\pi/\Delta x]$ yields
\begin{equation}
    2k \le \frac{\pi}{\Delta x}
    \quad\Longrightarrow\quad
    \Delta x \le \frac{\pi}{2k}.
\end{equation}
Using $k = 2\pi/\lambda$ gives
\begin{equation}
    \Delta x \le \frac{\pi}{2\cdot 2\pi/\lambda}
            = \frac{\lambda}{4},
\end{equation}
which is the claimed sampling interval bound preventing spatial-frequency aliasing over the full DoA sector.

\subsubsection*{2) Constraint on $\ell$ }
Let the mother window $w_0(x)$ be a real rectangular window of width $\ell$ centered at the origin,
\begin{equation}
    w_0(x) =
    \begin{cases}
        1, & |x|\le \ell/2,\\
        0, & \text{otherwise}.
    \end{cases}
\end{equation}
Its spatial Fourier transform is well known,
\begin{align}
    \hat{w}_0(\omega)
    &= \int_{-\ell/2}^{\ell/2} e^{\mathrm{j}\omega x}\,\mathrm{d}x
     = \frac{2\sin(\omega\ell/2)}{\omega}                    \\
    &= \ell\,\mathrm{sinc}\!\Bigl(\frac{\omega\ell}{2}\Bigr),
\end{align}
so the non-zero zeros occur at
\begin{equation}
    \omega = \frac{2\pi m}{\ell}, \qquad m \in \mathbb{Z}\setminus\{0\}.
\end{equation}
The shifted windows $w_j(x) = w_0(x-x_j)$ used to form the virtual array have Fourier transforms
\[
    \hat{w}_j(\omega) = e^{\mathrm{j}\omega x_j}\hat{w}_0(\omega),
\]
i.e., they share the same \emph{magnitude} response and the same spectral zeros as $w_0$.

From the linearized model, the $i$-th target contributes an effective complex amplitude
\begin{equation}
    b_i = A_i\,\hat{w}_0(\Delta k_i)\,e^{-\mathrm{j}\Delta\phi_i},
\end{equation}
so $\hat{w}_0(\Delta k_i)$ plays the role of an element factor. If $\hat{w}_0(\Delta k_i) = 0$ for some admissible $\Delta k_i$, then \emph{all} channels receive zero contribution from that target, and the corresponding DoA becomes a blind direction.

We have already established that all spatial frequencies of interest satisfy $\Delta k_i \in [-2k,2k]$. To ensure that no non-zero zero of $\hat{w}_0(\omega)$ falls inside this band, the first zeros at $\omega = \pm 2\pi/\ell$ must lie strictly outside $[-2k,2k]$, i.e.,
\begin{equation}
    \frac{2\pi}{\ell} > 2k
    \quad\Longrightarrow\quad
    \ell < \frac{\pi}{k}.
\end{equation}
Substituting $k = 2\pi/\lambda$ yields
\begin{equation}
    \ell < \frac{\pi}{2\pi/\lambda} = \frac{\lambda}{2},
\end{equation}
which guarantees that the window’s spectral response has no nulls within the entire spatial-frequency band corresponding to the DoA sector. This completes the proof.
\end{proof}

\bibliographystyle{IEEEtran}
\bibliography{mybib}

\end{document}

%% file: defs.tex
\newtheoremstyle{noperiod}  
  {\topsep}                 
  {\topsep}                 
  {\itshape}                
  {}                        
  {\bfseries}               
  {}                        
  {.5em}                    
  {}                        

\theoremstyle{noperiod}     

\theoremstyle{noperiod}
\newtheorem{theorem}{Theorem}
\theoremstyle{noperiod}

\theoremstyle{noperiod}

\theoremstyle{noperiod}

\theoremstyle{noperiod}

